\documentclass[letterpaper, 10 pt, conference]{ieeeconf}  
\usepackage[utf8]{inputenc}
\usepackage[T1]{fontenc}

\IEEEoverridecommandlockouts                              

\usepackage{graphics} 
\usepackage{epsfig} 
\usepackage{mathptmx} 
\usepackage{times} 
\usepackage{amsmath} 
\usepackage{amssymb}  
\usepackage{algorithmic}
\usepackage{graphicx}
\usepackage{textcomp}
\usepackage{algorithm}
\usepackage{hyperref}
\usepackage{xcolor}

\title{\LARGE \bf
SCORE: Statistical Certification of Regions of Attraction via Extreme Value Theory
}

\author{Pietro Zanotta$^{1}$ and Panos Stinis$^{2}$ and Ján Drgoňa$^{1}$
 \thanks{This work was supported by the U.S. Department of Energy (DOE), Office of Science, Advanced Scientific Computing Research (ASCR) program, under the Uncertainty Quantification for Multifidelity Operator Learning (MOLUcQ) project (Project No. 81739). }
\thanks{$^{1}$Johns Hopkins University. (pzanott1@jh.edu, jdrgona1@jh.edu)}%
\thanks{$^{2}$Pacific Northwest National Laboratory. (panagiotis.stinis@pnnl.gov)}%
}

\usepackage{amsthm}

\newtheoremstyle{break}
  {}
  {}
  {\itshape}
  {}
  {\bfseries}
  {.}
  {\newline}
  {}

\theoremstyle{break}
\newtheorem{proposition}{Proposition}

\newtheorem{assumption}{Assumption}
\newtheorem{remark}{Remark}
\newtheorem{lemma}{Lemma}
\newtheorem{theorem}{Theorem}
\newtheorem{corollary}{Corollary}

\begin{document}

\maketitle
\thispagestyle{empty}
\pagestyle{empty}

\begin{abstract}

Certifying the Region of Attraction (ROA) for high-dimensional nonlinear dynamical systems remains a severe computational bottleneck. Traditional deterministic verification methods, such as Sum-of-Squares (SOS) programming and Satisfiability Modulo Theories (SMT), provide hard guarantees but suffer from the curse of dimensionality, typically failing to scale beyond 20 dimensions. To overcome these limitations, we propose SCORE, a statistical certification framework that shifts from seeking deterministic guarantees to bounding the worst-case safety violation with high statistical confidence. By integrating Projected Stochastic Gradient Langevin Dynamics (PSGLD) with Extreme Value Theory (EVT),  we frame ROA certification as a constrained extreme-value estimation problem on the sublevel set boundary. We theoretically demonstrate that modeling the optimization process as a stochastic diffusion on a compact manifold places the local maxima of the Lyapunov derivative into the Weibull maximum domain of attraction. Since the Weibull domain features a finite right endpoint, we can compute a rigorous statistical upper bound on the global maximum of the Lyapunov derivative. Numerical experiments validate that our EVT-based approach achieves certification tightness competitive to exact SOS programming on a 2D Van der Pol benchmark. Furthermore, we demonstrate unprecedented scalability by successfully certifying a dense, unstructured 500-dimensional ODE system up to a confidence level of 99.99\%, effectively bypassing the severe combinatorial constraints that limit existing formal verification pipelines.
\end{abstract}

\begin{figure}[h!]
    \centering
    \includegraphics[width=0.9\linewidth]{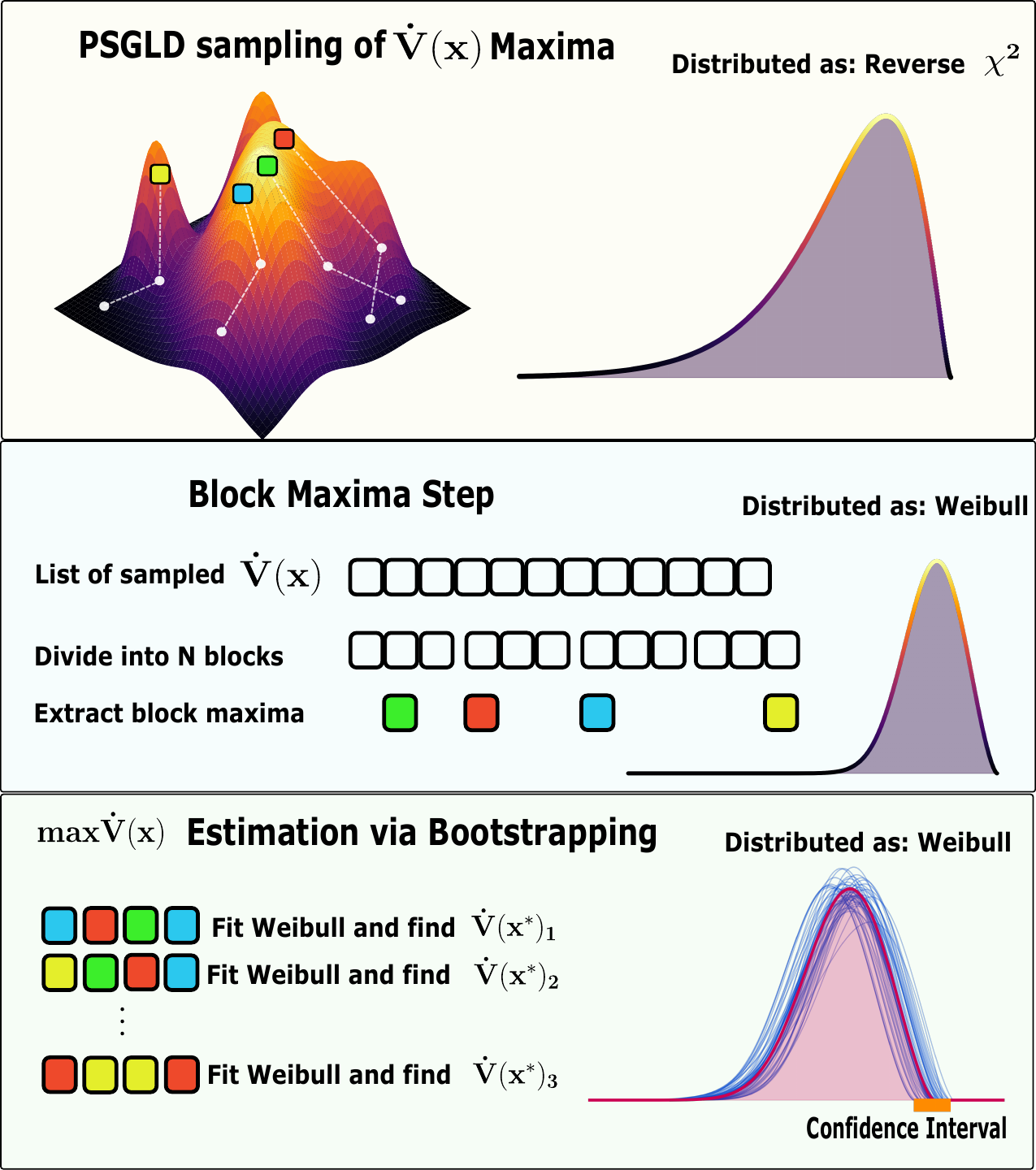}
    \caption{Graphical overview of the SCORE certification pipeline. The optimization process begins with the PSGLD sampling of $\dot{V}(x)$ maxima across the constraint manifold. Driven by the projected quadratic geometry near these peaks, the local optimality gap is distributed as a reverse $\chi^2$. From the resulting list of sampled $\dot{V}(x)$, the algorithm divides the samples into $N$ blocks to extract the block maxima. Because the underlying manifold is compact and enforces a finite upper bound, these extreme values are distributed as a Weibull distribution. Finally, the maximum $\dot{V}(x)$ is estimated via a bootstrapping procedure that fits multiple Weibull distributions to resampled subsets of the block maxima, computing a strict upper confidence interval to bound the worst-case violation and certify the ROA.}
\end{figure}

\section{INTRODUCTION}
\label{sec:introduction}
The deployment of safety-critical control systems requires rigorous guarantees of stability across a wide range of initial conditions. For continuous-time nonlinear dynamical systems, estimating the Region of Attraction (ROA) is a fundamental problem that is addressed via Lyapunov functions \cite{khalil2002nonlinear}. Lyapunov's method yields a formal certificate of asymptotic stability by identifying a positive definite function $V(x)$. 
However, the core requirement of this method, the rigorous certification of the decrease condition ($\dot{V}(x) < 0$) across a continuous state space, remains a severe computational challenge in high-dimensions. Traditional deterministic verification techniques, such as Sum-of-Squares (SOS) programming \cite{antonis_papachristodoulou_sprajna_2005} and Satisfiability Modulo Theories (SMT) \cite{chang_roohi_gao_2019, gao_kong_clarke_2013}, provide deterministic stability guarantees but scale poorly with state dimension. SOS programming suffers from combinatorial matrix explosions for high-degree polynomials, while SMT solvers encounter NP-hard constraint resolution challenges when verifying dense representations, such as Neural Lyapunov Functions (NLFs) \cite{Formal_Synthesis_of_Lyapunov_Neural_Networks,richards2018,Manek2019,Mukherjee2022}.

To mitigate these scaling limitations, recent literature has focused on exploiting inherent system properties. Algebraic verification can be rendered more tractable by leveraging sparsity patterns or chordal graph structures \cite{zheng2018sparse}. Concurrently, machine learning approaches have introduced compositional Neural Lyapunov architectures for interconnected subsystems \cite{liu2024compositionally}, or architectures that enforce certification by bounding the Lipschitz constant during training \cite{gaby2022lyapunov}. 

Despite these advances, a critical gap remains: certifying high-dimensional, dense, and highly coupled nonlinear systems where inherent sparsity or modularity cannot be readily exploited. To address this limitation and bypass the dimensional scaling bottlenecks of deterministic verification, we propose SCORE (\textbf{S}tatistical \textbf{C}ertification \textbf{o}f \textbf{R}egions of Attraction via \textbf{E}xtreme Value Theory). This framework shifts from deterministic guarantees to bounding the worst-case safety violation with high statistical confidence by reframing ROA certification as a constrained extreme-value estimation problem. Rather than exhaustively verifying the high-dimensional volume, we evaluate the maximum of the Lyapunov derivative strictly on the constraint manifold defined by the sublevel set boundary.

Specifically, the main contributions of this paper are:
\paragraph{\textbf{A Novel Statistical Certification Framework}} We integrate EVT with PSGLD \cite{pmlr-v134-lamperski21a} to reframe region of attraction certification as a constrained extreme-value estimation problem, bypassing the bottlenecks of deterministic methods.
\paragraph{\textbf{Theoretical Guarantee of Boundedness}} We theoretically demonstrate that modeling the optimization process as a stochastic diffusion on a compact manifold places the local maxima of the Lyapunov derivative into the Weibull maximum domain of attraction, allowing for a rigorous statistical upper bound.
\paragraph{\textbf{Improved Scalability}} We empirically validate that our EVT-based approach closely approximates the certification tightness of exact SOS programming while also scaling to dense, unstructured Ordinary Differential Equation (ODE) systems of up to 500 dimensions, far beyond the typical limit of existing formal verification pipelines.

The remainder of this paper is organized as follows: Section \ref{sec:problem_formulation} formulates the certification problem and details the computational bottlenecks of deterministic methods. Section \ref{sec:methodology} introduces the EVT certification framework, including the Langevin sampling strategy and theoretical proofs regarding the Weibull domain assumption. Section \ref{sec:experiments} presents numerical experiments benchmarking the tightness and scalability of the proposed method against state-of-the-art baselines. Finally, Section \ref{sec:conclusion} concludes the paper.

\section{PROBLEM FORMULATION}
\label{sec:problem_formulation}

Consider a continuous-time system of ODEs $\frac{dx}{dt} = f(x)$, where $x \in \mathbb{R}^N$ is the state vector and $f$ is locally Lipschitz continuous with an asymptotically stable origin ($f(0) = 0$). To estimate its Region of Attraction (ROA), let $V: \mathbb{R}^N \to \mathbb{R}_{\geq 0}$ be a continuously differentiable, positive definite, and radially unbounded Lyapunov candidate function. A sublevel set $\Omega_\rho = \{ x \in \mathbb{R}^N \mid V(x) \leq \rho \}$ is a forward-invariant subset of the ROA if the Lie derivative $\dot{V}(x) = \nabla V(x) \cdot f(x)$ is strictly negative for all $x \in \Omega_\rho \setminus \{0\}$~\cite{khalil2002nonlinear}. 
Because $V(x)$ is continuous, evaluating this condition across the entire volume is redundant: it is sufficient to evaluate the maximum of $\dot{V}(x)$ strictly on the boundary manifold $\mathcal{M} = \{ x \in \mathbb{R}^N \mid V(x) = \rho \}$. The certification problem thus reduces to:
\begin{equation}
    \gamma^* = \max_{x \in \mathcal{M}} \dot{V}(x).
    \label{eq:certification_condition}
\end{equation}
If $\gamma^* < 0$, $\Omega_\rho$ is formally certified.

While mathematically straightforward, finding the exact global maximum $\gamma^*$ is computationally intractable in high dimensions due to the highly non-convex nature of $\dot{V}(x)$. Classical grid-based or branch-and-bound methods \cite{pmlr-v235-yang24f, giesl_hafstein_2015} suffer from the curse of dimensionality. Rigorous formal verification via Satisfiability Modulo Theories (SMT) \cite{Formal_Synthesis_of_Lyapunov_Neural_Networks, chang_roohi_gao_2019, Towards_Learning_and_Verifying_Maximal_Neural_Lyapunov_Functions, Dai2021LyapunovstableNC} is inherently NP-hard and fails to scale, while Sum-of-Squares (SOS) programming \cite{antonis_papachristodoulou_sprajna_2005, A_theorem_of_the_alternative_for_SOS_Lyapunov_functions} experiences combinatorial matrix explosions. To bypass these scaling limits, we pivot from seeking absolute deterministic guarantees to bounding $\gamma^*$ with high statistical confidence, reframing the search for the worst-case safety violation on $\mathcal{M}$ as a rare-event simulation problem.

\section{STATISTICAL CERTIFICATION VIA EXTREME VALUE THEORY}
\label{sec:methodology}

This section outlines our statistical framework for certifying the ROA for high-dimensional systems. By combining constrained Langevin sampling with Extreme Value Theory (EVT), we estimate the maximum $\gamma^*$ on the level set $\mathcal{M}$. By analyzing the geometric properties of the optimization landscape near local maxima, we provide a theoretical justification for modeling the tail distribution of $\dot{V}(x)$ as a Weibull-class Generalized Extreme Value (GEV) distribution.

\subsection{The Optimizer as a Dynamical System}
Let $\mathcal{M}$ be the compact constraint manifold of dimension $N-1$. To sample the local geometry of the peaks of $\dot{V}$, we employ PSGLD \cite{pmlr-v134-lamperski21a, welling2011bayesian}. This models the optimization as a discrete-time stochastic process restricted to the manifold:
\begin{equation}
    x_{k+1} = \Pi_{\mathcal{M}} \left( x_k + \eta \nabla \dot{V}(x_k) + \sqrt{2T \eta} \xi_k \right)
    \label{eq:disc_time_sgld}
\end{equation}
where $\Pi_{\mathcal{M}}$ denotes the projection operator onto the level set, $\eta$ is the step size, $T$ is the temperature parameter, and $\xi_k \sim \mathcal{N}(0, I_N)$ is injected standard Gaussian noise. 
The injected noise prevents the optimizer from collapsing into a single local maximum, allowing it to explore the local geometry of the highest peaks. The continuous-time limit of this process yields a stationary distribution known as the Gibbs measure, generating the statistical properties required for our EVT derivation.

\subsection{Theoretical Justification for the Weibull Domain}
\label{sec:proof}
The central requirement for applying EVT is determining the correct maximum domain of attraction for the distribution of $\dot{V}$ samples. We demonstrate that modeling the optimizer as a stochastic process on a manifold~\eqref{eq:disc_time_sgld}  places the distribution of local maxima into the Weibull domain of attraction.

\begin{assumption}[Projected Quadratic Geometry]
\label{ass:quadratic_geometry}
Let $x^\star \in \mathcal{M}$ be a nondegenerate local maximizer of $\dot{V}$ on $\mathcal{M}$. Then, in local tangent coordinates on $T_{x^\star}\mathcal{M}$, the Lie derivative admits the second-order expansion:
\begin{equation}
    \dot{V}(x) = \dot{V}(x^\star) - \frac{1}{2}(x-x^\star)^\top H_{\mathcal{M}} (x-x^\star) + o(\|x-x^\star\|^2),
    \label{eq:quadratic_geometry}
\end{equation}
where $H_{\mathcal{M}} \succ 0$ is the Hessian of $-\dot{V}$ restricted to the tangent space, and $o(\cdot)$  denotes a higher-order remainder term.
\end{assumption}

\begin{remark}[Local Morse-Type Quadratic Geometry] While high-dimensional landscapes may exhibit degeneracy, we rely on the genericity of Morse functions to assume the basin is locally diffeomorphic to a quadratic bowl in the effective dimensions of the system dynamics.
\end{remark}

\begin{lemma}[Local Gibbs Law]
\label{lem:gibbs}
Consider the Projected Stochastic Gradient Langevin Dynamics \eqref{eq:disc_time_sgld}. Assume its continuous-time limit is ergodic on the compact manifold $\mathcal{M}$. Then its stationary density is:
\begin{equation}
    p(x) \propto \exp\!\left(\frac{\dot{V}(x)}{T}\right) \mathbb{I}_{\mathcal{M}}(x).
    \label{eq:gibbs_compact}
\end{equation}
\end{lemma}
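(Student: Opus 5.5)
The plan is to identify the continuous-time limit of \eqref{eq:disc_time_sgld} as a reflected/projected Langevin diffusion on $\mathcal{M}$, write down its Fokker--Planck equation, and check that the Gibbs density \eqref{eq:gibbs_compact} is a stationary solution. Ergodicity, which is assumed, then makes it the unique invariant law.

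\textbf{Step 1: the limiting diffusion.} As $\eta \to 0$ with time $t = k\eta$, the update $x_k + \eta \nabla \dot{V}(x_k) + \sqrt{2T\eta}\,\xi_k$ followed by $\Pi_{\mathcal{M}}$ becomes, to first order, the tangential projection $P(x) = I - n(x)n(x)^\top$. Here $n = \nabla V/\|\nabla V\|$ is the unit normal to the level set. Note that $\nabla V \neq 0$ on $\mathcal{M}$ for $\rho>0$, because $V$ is positive definite with its only critical point at the origin on sublevel sets; I would state this regularity explicitly. In the limit the projected increment becomes the Stratonovich SDE
\[
dX_t = P(X_t)\nabla \dot{V}(X_t)\,dt + \sqrt{2T}\,P(X_t)\circ dW_t .
\]
This is the standard construction of Brownian motion on an embedded submanifold with drift $\nabla_{\mathcal{M}} \dot{V}$, the Riemannian gradient for the induced metric. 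The second-order normal correction coming from the projection yields exactly the mean-curvature term that turns $P\circ dW$ into Brownian motion on $\mathcal{M}$. For this limit I would invoke the analysis of \cite{pmlr-v134-lamperski21a}, or a weak-convergence argument for projected schemes. The resulting generator is
\[
\mathcal{L}\phi = \langle \nabla_{\mathcal{M}} \dot{V}, \nabla_{\mathcal{M}}\phi\rangle + T\,\Delta_{\mathcal{M}}\phi ,
\]
with $\Delta_{\mathcal{M}}$ the Laplace--Beltrami operator.

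\textbf{Step 2: stationarity.} Let $p = Z^{-1}e^{\dot{V}/T}$ with respect to the Riemannian volume measure $d\sigma$ on $\mathcal{M}$; compactness gives $Z<\infty$. Since $\mathcal{M}$ is closed and boundaryless, the divergence theorem gives, for every smooth $\phi$,
\[
\int_{\mathcal{M}} T\,\Delta_{\mathcal{M}}\phi \; e^{\dot{V}/T}\,d\sigma = -\int_{\mathcal{M}} \langle \nabla_{\mathcal{M}}\phi, \nabla_{\mathcal{M}}\dot{V}\rangle e^{\dot{V}/T}\,d\sigma .
\]
Hence $\int \mathcal{L}\phi\, p\, d\sigma = 0$, which is the weak form of $\mathcal{L}^* p = 0$. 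The operator is in fact reversible with respect to $p$. Ergodicity, assumed in the statement (and following from ellipticity on a connected compact manifold), gives uniqueness. The indicator $\mathbb{I}_{\mathcal{M}}$ in \eqref{eq:gibbs_compact} simply records that the law is supported on $\mathcal{M}$ and is a density with respect to surface measure rather than Lebesgue measure on $\mathbb{R}^N$.

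\textbf{Main obstacle.} The delicate step is Step 1: justifying that the discrete projection onto a curved level set produces exactly the Laplace--Beltrami diffusion, with no spurious drift. A spurious drift would tilt the invariant measure by a curvature- or $\|\nabla V\|$-dependent factor. My approach is a second-order Taylor expansion of $\Pi_{\mathcal{M}}$ near $\mathcal{M}$. Its Hessian contributes an $O(\eta)$ mean term $-T\,H_{\mathcal{M}}\,n$, where $H_{\mathcal{M}}$ denotes mean curvature, which matches the Itô correction of $P\circ dW$. I would then argue convergence of generators, which is enough for the stationary law. If this matching turned out imperfect, the conclusion should be read as holding for the idealized projected diffusion, which is the continuous-time model the lemma refers to.
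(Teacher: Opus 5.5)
Your proposal is correct, and it proves the lemma by a more careful, intrinsic route than the paper. The paper's argument is two sentences. The ambient Langevin diffusion with potential $-\dot{V}$ has Gibbs measure $\propto e^{\dot{V}/T}$, and the projection ``restricts the support while preserving the Gibbs form.'' You never use the ambient invariant measure. Instead you identify the small-step limit as the Riemannian Langevin diffusion on $\mathcal{M}$ and check stationarity by integration by parts on the closed manifold.

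This buys something real. Since $\mathcal{M}$ is Lebesgue-null, the paper's ``restriction'' is not literally defined. The natural way to make it precise is to condition the ambient Gibbs law on $\{V=\rho\}$ via the coarea formula, which gives density $\propto e^{\dot{V}/T}/\|\nabla V\|$ with respect to surface measure. That law is \emph{not} invariant for the projected diffusion unless $\|\nabla V\|$ is constant on $\mathcal{M}$. Your argument fixes the reference measure as $d\sigma$ and makes explicit that the result depends on $\Pi_{\mathcal{M}}$ being the closest-point projection. The discrepancy is harmless downstream, because a smooth positive factor does not change the local exponent in Proposition~\ref{prop:gap_compact} or Theorem~\ref{thm:weibull}.

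The obstacle you flag resolves in your favor. Write $\Delta=\eta\nabla\dot{V}+\sqrt{2T\eta}\,\xi$ and $\Delta_T=P\Delta$, and let $S$ be the shape operator and $\mathrm{II}$ the second fundamental form. Since $\Pi_{\mathcal{M}}$ is constant along normal lines, $D^2\Pi_{\mathcal{M}}[n,n]=0$. The mixed normal--tangential term has zero mean because $\mathbb{E}[(n^\top\xi)\,P\xi]=Pn=0$. The tangential term $-\tfrac12\mathrm{II}(\Delta_T,\Delta_T)\,n$ has mean $-T\eta\,(\mathrm{tr}\,S)\,n+O(\eta^2)$, which is exactly the It\^o correction of $\sqrt{2T}\,P\circ dW$.

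Two small fixes. First, $\nabla V\neq 0$ on $\mathcal{M}$ does not follow from positive definiteness and radial unboundedness, since such a $V$ can have critical points away from the origin. State it as an assumption, as the paper implicitly does by calling $\mathcal{M}$ an $(N-1)$-dimensional manifold. Second, rename your mean curvature, since $H_{\mathcal{M}}$ already denotes the tangential Hessian in Assumption~\ref{ass:quadratic_geometry}.
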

\begin{proof}
The Langevin diffusion associated with potential $-\dot{V}(x)$ admits the Gibbs measure proportional to $\exp(\dot{V}(x)/T)$. Restricting the process to $\mathcal{M}$ via projection restricts the support while preserving the Gibbs form.
\end{proof}

\begin{proposition}[Local Optimality-Gap Law]
\label{prop:gap_compact}
Under Assumption \ref{ass:quadratic_geometry} and Lemma \ref{lem:gibbs}, samples in the local tangent space of $x^\star$ are asymptotically Gaussian, satisfying $x \sim \mathcal{N}(x^\star, TH_{\mathcal{M}}^{-1})$ while the local optimality gap $Z := \dot{V}(x^\star) - \dot{V}(x)$ satisfies $Z \overset{d}{\approx} \frac{T}{2}\chi_d^2$, where $d = \dim(\mathcal{M}) = N-1$.
\end{proposition}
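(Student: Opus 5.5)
The plan is a Laplace-type local analysis of the Gibbs density from Lemma~\ref{lem:gibbs} in tangent coordinates at $x^\star$. It then uses the fact that a quadratic form of a Gaussian vector, whitened by its own covariance, is chi-square.

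\textbf{Step 1: localize the Gibbs density.} Fix a small neighborhood $U$ of $x^\star$ on $\mathcal{M}$, lying in the basin of the nondegenerate maximizer. Parametrize $U$ by a chart $\phi: B_r(0)\subset T_{x^\star}\mathcal{M}\cong\mathbb{R}^d \to U$ with $\phi(0)=x^\star$ and $D\phi(0)$ an isometry. Write $u$ for the tangent coordinate, so that $x-x^\star \approx u$ to first order.

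In these coordinates the Riemannian volume element is $J(u)\,du$ with $J(u)=1+O(\|u\|)$. Conditioning the stationary law \eqref{eq:gibbs_compact} on $U$ then gives the density
\[
p(u) \propto \exp\!\left(\frac{\dot V(\phi(u))-\dot V(x^\star)}{T}\right)J(u).
\]

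\textbf{Step 2: substitute the expansion and rescale.} Insert Assumption~\ref{ass:quadratic_geometry} into this density, giving
\[
p(u)\propto \exp\!\left(-\tfrac{1}{2T}u^\top H_{\mathcal{M}} u + o(\|u\|^2)/T\right)J(u).
\]
Now rescale $u=\sqrt{T}\,w$. The exponent becomes $-\tfrac12 w^\top H_{\mathcal{M}} w + o(\|w\|^2)$, where the remainder is $o(1)$ on compact sets of $w$ as $T\to 0$. The Jacobian satisfies $J(\sqrt T w)\to 1$.

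Dominated convergence then shows that the law of $w$ converges to $\mathcal{N}(0,H_{\mathcal{M}}^{-1})$. The dominating bound comes from the tail estimate $\dot V(x^\star)-\dot V\ge c\|u\|^2$ on $U$, which follows from $H_{\mathcal{M}}\succ0$ after shrinking $U$. The normalizing constants converge along with the densities. Undoing the scaling gives $x\approx\mathcal N(x^\star,TH_{\mathcal M}^{-1})$, which is the first claim.

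\textbf{Step 3: identify the gap law.} Apply the continuous mapping theorem to
\[
Z/T = \tfrac12 w^\top H_{\mathcal{M}} w + o(\|w\|^2).
\]
The remainder vanishes in probability, so $Z/T$ converges in law to $\tfrac12 G^\top H_{\mathcal M}G$ with $G\sim\mathcal N(0,H_{\mathcal M}^{-1})$. Setting $Y=H_{\mathcal M}^{1/2}G\sim\mathcal N(0,I_d)$ gives $G^\top H_{\mathcal M} G=\|Y\|^2\sim\chi^2_d$. Hence $Z\overset{d}{\approx}\tfrac T2\chi^2_d$ with $d=N-1$.

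\textbf{Main obstacle.} The delicate step is making "asymptotically" precise. Two issues arise:
\begin{itemize}
\item The statement must be read as a small-temperature limit conditioned on the basin $U$. The mass outside $U$, and the competition with other local maxima, has to be handled by conditioning, or by noting that $x^\star$ carries the dominant Gibbs weight.
\item The $o(\|x-x^\star\|^2)$ remainder must be controlled uniformly enough for dominated convergence to apply.
\end{itemize}
I would handle both by working on a fixed small $U$ on which $\dot V(x^\star)-\dot V\ge \tfrac{\lambda_{\min}}{4}\|u\|^2$. This gives an integrable Gaussian envelope for the rescaled densities. The curvature of $\mathcal M$, which enters through $J$ and the chart, is only a first-order correction and therefore vanishes under the $\sqrt T$ scaling.
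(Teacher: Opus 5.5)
Your proposal is correct and follows the same route as the paper's proof: substitute the quadratic expansion of Assumption~\ref{ass:quadratic_geometry} into the Gibbs density of Lemma~\ref{lem:gibbs} to get a local Gaussian, then whiten by $H_{\mathcal{M}}^{1/2}$ to obtain $\frac{T}{2}\chi^2_d$. The paper does this formally in two lines, while your $\sqrt{T}$ rescaling, chart Jacobian, quadratic envelope for dominated convergence, and conditioning on the basin make precise the small-temperature sense of ``asymptotically'' that the paper leaves implicit.
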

\begin{proof}
Substituting \eqref{eq:quadratic_geometry} into \eqref{eq:gibbs_compact} yields a local Gaussian density for $x - x^\star$. Using the quadratic expansion, the gap $Z \approx \frac{1}{2}(x-x^\star)^\top H_{\mathcal{M}} (x-x^\star)$, which becomes a scaled chi-square random variable after whitening by $H_{\mathcal{M}}^{1/2}$.
\end{proof}

\begin{theorem}[Weibull Maximum Domain of Attraction]
\label{thm:weibull}
Let $Y = \dot{V}(x)$, where $x$ is sampled by the projected Langevin dynamics on $\mathcal{M}$~\eqref{eq:disc_time_sgld}. Under Assumption \ref{ass:quadratic_geometry}, as $y \to y_{\max}^-$,
\begin{equation}
    1 - F_Y(y) \asymp (y_{\max} - y)^{d/2},
    \label{eq:weibull_compact}
\end{equation}
where $y_{\max} := \dot{V}(x^\star)$, and $f(y) \asymp g(y)$ indicates bounding by constants, $c_1 g(y) \le f(y) \le c_2 g(y)$ ($c_2 \ge c_1 > 0$). Hence, the distribution of $Y$ belongs locally to the Weibull maximum domain of attraction.
\end{theorem}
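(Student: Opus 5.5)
The plan is to compute the tail $1-F_Y(y)=\mathbb{P}(Y>y)$ directly under the stationary Gibbs law of Lemma~\ref{lem:gibbs}, localized near the maximizer $x^\star$. We work in the regime where $x^\star$ is the (unique, nondegenerate) global maximizer, so that $y_{\max}=\dot V(x^\star)=\gamma^*$. Then for $y$ close to $y_{\max}$ the superlevel set $\{x\in\mathcal{M}:\dot V(x)>y\}$ is a small neighbourhood $U_y$ of $x^\star$. By compactness of $\mathcal{M}$ and continuity of $\dot V$, there is $\delta>0$ such that $\dot V\le y_{\max}-\delta$ outside a fixed chart around $x^\star$. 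Hence for $y>y_{\max}-\delta$ no mass outside the chart contributes, and
\[
1-F_Y(y)=\frac{1}{\mathcal{Z}}\int_{U_y} e^{\dot V(x)/T}\,d\sigma(x),
\]
where $\sigma$ is the Riemannian volume on $\mathcal{M}$ and $\mathcal{Z}$ is the finite normalizing constant. Since $\mathcal{M}$ is compact, $0<\mathcal{Z}<\infty$.

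Next we bound the integrand and the volume factor by constants. On $U_y$ we have $e^{(y_{\max}-\delta)/T}\le e^{\dot V/T}\le e^{y_{\max}/T}$. In tangent coordinates $u=x-x^\star\in\mathbb{R}^d$, the volume density $d\sigma=J(u)\,du$ has $J$ continuous with $J(0)>0$, so $J$ is pinched between two positive constants on the chart. It therefore suffices to show that the Lebesgue volume of $\{u:\ y_{\max}-\dot V(u)<\epsilon\}$, with $\epsilon=y_{\max}-y$, satisfies $\mathrm{vol}\asymp\epsilon^{d/2}$. This is the same quantity underlying Proposition~\ref{prop:gap_compact}: near the peak the gap $Z$ is a quadratic form, and the small-ball probability of $\frac{T}{2}\chi^2_d$ at level $\epsilon$ scales as $\epsilon^{d/2}$. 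Here we only use the geometry, not the Gaussian approximation.

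The main step is to control the $o(\|u\|^2)$ remainder in Assumption~\ref{ass:quadratic_geometry}. Let $\lambda_{\min},\lambda_{\max}>0$ be the extreme eigenvalues of $H_{\mathcal{M}}$. Choose $r_0$ small enough that $|o(\|u\|^2)|\le\frac{\lambda_{\min}}{4}\|u\|^2$ for $\|u\|\le r_0$. On that ball,
\[
\tfrac{\lambda_{\min}}{4}\|u\|^2\;\le\; y_{\max}-\dot V(u)\;\le\;\tfrac{3\lambda_{\max}}{4}\|u\|^2 .
\]
Shrinking $\delta$ if necessary so that $U_y$ lies in this ball, this gives the sandwich
\[
B\!\left(0,\sqrt{4\epsilon/(3\lambda_{\max})}\right)\subseteq U_y\subseteq B\!\left(0,\sqrt{4\epsilon/\lambda_{\min}}\right).
\]
Both balls have volume proportional to $\epsilon^{d/2}$, which yields \eqref{eq:weibull_compact} with explicit $c_1,c_2$ depending on $T$, $\delta$, $\lambda_{\min}$, $\lambda_{\max}$, $J$ and $\mathcal{Z}$.

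Finally, \eqref{eq:weibull_compact} means the survival function behaves like $(y_{\max}-y)^{\alpha}$ with $\alpha=d/2$ and finite right endpoint. To conclude the Weibull domain, we invoke von Mises / Gnedenko's characterization: $F$ is in the Weibull domain iff $1-F(y_{\max}-1/t)$ is regularly varying in $t$ with index $-\alpha$. A pure $\asymp$ bound does not give regular variation, and this is the real obstacle. To obtain it, we refine the sandwich argument to a true asymptotic $1-F_Y(y)\sim C(y_{\max}-y)^{d/2}$. Letting $r_0\to0$, the ratio of the two balls' volumes can be replaced by the ellipsoid volume $\omega_d\det(H_{\mathcal{M}}/2)^{-1/2}\epsilon^{d/2}$, with $\omega_d$ the volume of the unit ball in $\mathbb{R}^d$. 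Continuity of $J$ and of $e^{\dot V/T}$ at $x^\star$ turns the constants into limits. This gives regular variation with index $d/2$ and hence the claimed Weibull domain, with shape parameter governed by $d/2$. If several global maximizers exist, we sum the finitely many nondegenerate contributions, which does not change the exponent.
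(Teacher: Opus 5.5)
Your argument is correct, but it takes a different route from the paper. The paper's proof takes the approximate law $Z \overset{d}{\approx} \frac{T}{2}\chi^2_d$ from Proposition~\ref{prop:gap_compact} and reads off $f_Z(z)\asymp z^{d/2-1}$ near $0$. It integrates this to $\mathbb{P}(Z\le\varepsilon)\asymp\varepsilon^{d/2}$ and then asserts that such polynomial decay is necessary and sufficient for the Weibull domain.

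You instead compute the tail directly under the Gibbs law at fixed $T$. Compactness plus uniqueness of the global maximizer removes all mass away from $x^\star$. The Gibbs weight and the Riemannian Jacobian are then pinched between constants on the shrinking superlevel set. The exponent $d/2$ comes purely from sandwiching $\{\dot V>y\}$ between balls, with explicit control of the $o(\|u\|^2)$ remainder.

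The paper's route buys brevity and a direct tie to the $\chi^2$ picture that motivates the pipeline. Yours is more rigorous on three points the paper glosses over:
\begin{itemize}
\item You do not pass off an approximate (Laplace-type) distributional statement as an exact tail statement.
\item You make explicit that $x^\star$ must be the global maximizer. Otherwise $1-F_Y(y_{\max})>0$ and the claimed tail fails.
\item Most importantly, you correctly note that a two-sided $\asymp$ bound does not imply regular variation. So the paper's final ``necessary and sufficient'' sentence is not justified as written.
\end{itemize}
Your upgrade to $1-F_Y(y)\sim C(y_{\max}-y)^{d/2}$ is what actually satisfies Gnedenko's criterion. It also pins the index at $\alpha=d/2$, i.e.\ GEV shape $\xi=-2/d$.

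One wording fix: ``the ratio of the two balls' volumes'' does not literally tend to the ellipsoid volume. The refinement is cleanest if you take $|o(\|u\|^2)|\le\beta\|u\|^2$ and sandwich $U_y$ between the ellipsoids $\{\tfrac12 u^\top(H_{\mathcal{M}}\pm2\beta I)u<\epsilon\}$. Then let $\epsilon\to0$ first and $\beta\to0$ second.
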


\begin{proof}
Since $Y = y_{\max} - Z$, Proposition \ref{prop:gap_compact} implies that the lower tail of $Z$ follows a scaled $\chi_d^2$ distribution with density $f_Z(z) \asymp z^{d/2 - 1}$ as $z \to 0^+$. Integrating gives $\mathbb{P}(Z \le \varepsilon) \asymp \varepsilon^{d/2}$. Substituting $Z = y_{\max} - Y$ yields $\mathbb{P}(Y > y_{\max} - \varepsilon) \asymp \varepsilon^{d/2}$, which is equivalent to \eqref{eq:weibull_compact}. This polynomial decay toward a finite right endpoint is the necessary and sufficient condition for the Weibull maximum domain \cite{haan_ferreira_2020}.
\end{proof}

\begin{corollary}[Finite Endpoint and Statistical Certification]
\label{cor:certification}
Under Theorem \ref{thm:weibull}, the block maxima of $\dot{V}(x)$ admit a Weibull-class generalized extreme value fit. And the maximum
\begin{equation}
    \gamma^\star = \max_{x \in \mathcal{M}} \dot{V}(x)
\end{equation}
admits a finite statistical upper bound. If this upper bound is strictly negative, the sublevel set $\Omega_\rho$ is certified.
\end{corollary}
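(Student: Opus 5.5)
The plan is to assemble the corollary from three ingredients: Theorem \ref{thm:weibull}, the classical Fisher--Tippett--Gnedenko theorem in its domain-of-attraction form, and a confidence-interval construction for the endpoint of a fitted GEV law.

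\textbf{Step 1: identify the limit law of the block maxima.} Theorem \ref{thm:weibull} gives $1-F_Y(y)\asymp (y_{\max}-y)^{d/2}$ as $y\to y_{\max}^-$. Strictly, von Mises/de Haan's characterization of the Weibull domain asks for regular variation, $1-F_Y(y_{\max}-1/t)\in RV_{-\alpha}$, which is stronger than two-sided bounds by constants. I would close this gap with Proposition \ref{prop:gap_compact}. It gives the lower tail of $Z$ as a scaled $\chi^2_d$, so $\mathbb{P}(Z\le \varepsilon)\sim c\,\varepsilon^{d/2}$ with an actual limit constant, not merely $\asymp$. This yields regular variation with index $\alpha=d/2$. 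Then, for blocks of size $m$, there are normalizing constants $a_m\to0$ and $b_m=y_{\max}$ with
\begin{equation*}
\mathbb{P}\big((M_m-y_{\max})/a_m\le y\big)\to \exp\!\big(-(-y)^{\alpha}\big), \qquad y<0,
\end{equation*}
that is, a GEV law with shape $\xi=-2/d<0$. This is the Weibull class, which proves the first claim of the corollary.

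\textbf{Step 2: handle dependence in the samples.} The PSGLD samples form a Markov chain, not an i.i.d.\ sequence. I would assume geometric ergodicity (compact $\mathcal{M}$ plus a smooth Gibbs density with full support makes Doeblin-type mixing plausible) and invoke Leadbetter's $D(u_m)$ and $D'(u_m)$ conditions, or simply an extremal index $\theta\in(0,1]$. With these, the block-maxima limit remains GEV with the same shape, and only the scale changes. Blocks that are long compared with the mixing time then give approximately independent maxima.

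\textbf{Step 3: produce the statistical bound and certify.} Because $\xi<0$, the GEV support has a finite right endpoint $\mu-\sigma/\xi$. The fitted parameters $(\hat\mu,\hat\sigma,\hat\xi)$ therefore give an endpoint estimate $\hat y_{\max}$. For $\xi>-1/2$ (maximum likelihood regularity, Smith 1985) this estimate is asymptotically normal; otherwise I would use the bootstrap described in the paper. Either route gives an upper confidence limit $U_{1-\delta}$ with $\mathbb{P}(y_{\max}\le U_{1-\delta})\ge 1-\delta$ asymptotically.

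Two further points are needed to finish. First, the peaks must connect to $\gamma^\star$: $y_{\max}$ is the global supremum of the support of $Y$, and since the Gibbs density \eqref{eq:gibbs_compact} is positive on all of $\mathcal{M}$, the essential supremum of $Y$ equals $\max_{\mathcal{M}}\dot V=\gamma^\star$. Second, the certification follows from the reduction in \eqref{eq:certification_condition}: if $U_{1-\delta}<0$, then $\gamma^\star<0$ with confidence $1-\delta$, so $\dot V<0$ on the boundary manifold $\mathcal{M}$. The problem formulation already takes checking $\dot V$ on $\mathcal{M}$ as sufficient to certify $\Omega_\rho$, so I will not reprove that.

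\textbf{Main obstacle.} The delicate step is turning the local statement about a single maximizer $x^\star$ into a statement about the global tail. With several nondegenerate maximizers, I would sum their local contributions: the tail near $\gamma^\star$ is governed by the maximizers attaining $\gamma^\star$, and together they still give $\asymp\varepsilon^{d/2}$. Lower local peaks contribute nothing to the tail near $\gamma^\star$. Even so, the statistical guarantee is only asymptotic and relies on the sampler having visited the global basin. I would state this explicitly as the mixing assumption of Lemma \ref{lem:gibbs}, rather than try to prove it.
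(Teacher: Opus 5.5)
Your skeleton is the paper's. Its proof of Corollary~\ref{cor:certification} is only the observation that a Weibull-class GEV has the finite right endpoint $\mu-\sigma/\xi$, so the fitted model returns a finite bound; certification is then read off from the reduction of Section~\ref{sec:problem_formulation}. What you add repairs the chain of results feeding into that proof, and all three of your points are correct. Two-sided bounds $\asymp$ do not place a law in any max-domain of attraction, so regular variation is needed; the ``necessary and sufficient'' remark in the proof of Theorem~\ref{thm:weibull} is too strong as written. The tail statement of Theorem~\ref{thm:weibull} holds only if $x^\star$ is a \emph{global} maximizer, since below a non-global peak value $1-F_Y$ stays bounded away from zero. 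Finally, you tie the support endpoint to $\gamma^\star$ through positivity of the sampling density on $\mathcal{M}$, which the paper never does.

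The step that does not go through is the coverage claim in Step~3, that ``either route'' gives $\mathbb{P}(\gamma^\star\le U_{1-\delta})\ge 1-\delta$ asymptotically.
\begin{itemize}
\item With your own $\xi=-2/d$, the regular range $\xi>-1/2$ is exactly $d\ge 5$. The paper's 2D Van der Pol example has $d=1$, so $\xi=-2$.
\item For $\xi<-1$ the GEV density $\sigma^{-1}(1+\xi z)^{-1/\xi-1}\exp\bigl(-(1+\xi z)^{-1/\xi}\bigr)$ diverges at the endpoint. Placing the endpoint at the largest block maximum makes the likelihood unbounded, so there is no MLE to bootstrap.
\item For $-1\le\xi\le-1/2$ the MLE is nonregular, and no general bootstrap consistency result is available; the naive bootstrap already fails for the sample maximum.
\item Even for $d\ge 5$, block maxima are only approximately GEV, so coverage also needs $m\to\infty$ plus a bias condition.
\end{itemize}
Since the corollary only asserts a finite statistical bound, you have two options. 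You can claim just that; finiteness is immediate from $\hat\xi<0$, and it is all the paper proves. Or, for an honest confidence statement, you can exploit the known index. For i.i.d.\ samples with $\mathbb{P}(Z\le\varepsilon)\sim c\,\varepsilon^{d/2}$ and $k$ fixed, the ratio $(\gamma^\star-Y_{(n)})/(\gamma^\star-Y_{(n-k)})$ of top order statistics converges to $B^{2/d}$ with $B\sim\mathrm{Beta}(1,k)$. This is an asymptotic pivot that inverts to an explicit upper confidence bound for $\gamma^\star$.

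A few smaller points.
\begin{itemize}
\item Step~2 addresses a sampling scheme the paper does not use. In Algorithm~\ref{alg:certification} each sample is the terminal state of its own independently, uniformly initialized chain, so within-block samples and block maxima are exactly i.i.d., and no $D(u_m)$ or extremal-index argument is needed. What you need instead is that the $K$-step terminal law has a density on $\mathcal{M}$ that is continuous and positive at the global maximizers. That law is not the stationary law of Lemma~\ref{lem:gibbs}.
\item That same density property, rather than the ``$\overset{d}{\approx}$'' of Proposition~\ref{prop:gap_compact}, is the rigorous source of your exact constant. In tangent coordinates $u$, the $o(\|u\|^2)$ remainder of Assumption~\ref{ass:quadratic_geometry} sandwiches $\{Z\le\varepsilon\}$ between the ellipsoids $\{\tfrac12 u^\top H_{\mathcal M}u\le\varepsilon/(1\pm\tau)\}$. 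This gives $\mathbb{P}(Z\le\varepsilon)\sim c\,\varepsilon^{d/2}$ at fixed $T$ with no Gaussian approximation.
\item The reduction you import from Section~\ref{sec:problem_formulation} yields only forward invariance of $\Omega_\rho$, not containment in the ROA. For example, $\dot r=-r(r-1)(r-2)$ with $V=\|x\|^2$ and $\rho=9$ has $\dot V<0$ on $\mathcal{M}$, yet $\Omega_9$ contains the invariant circle $r=2$. That defect is inherited from the paper, not introduced by you.
\end{itemize}
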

\begin{proof}
The Weibull domain is characterized by a finite upper endpoint. Therefore, the fitted GEV model yields a finite upper bound estimator for $\gamma^\star$.
\end{proof}

This result provides the theoretical justification for enforcing a negative GEV distribution shape parameter ($\xi < 0$), formally capturing the physical bounding of the system's energy dissipation on the constraint manifold.

\begin{remark}[Finite Support and Verifiability]
The defining characteristic of the Weibull domain is its finite right endpoint. Statistically, this guarantees that the distribution of extreme $\dot{V}$ values is strictly bounded from above. This finite upper support is the critical property that allows us to estimate a definitive, finite upper bound for the global maximum $\gamma^*$, making the safety certification logically sound and computationally actionable.
\end{remark}

\subsection{The Statistical Certification Algorithm}
With the theoretical framework established, the empirical certification is conducted using a block maxima approach:

 \paragraph{\textbf{Sampling}} Simulate $M$ independent PSGLD chains uniformly initialized across $\mathcal{M}$ to collect $\dot{V}$ samples.
\paragraph{\textbf{Block Maxima}} Divide the optimization results into non-overlapping blocks of size $m$. Extract the single maximum value from each block, $M_b = \max \{y^{(1)}_{final}, \dots, y^{(m)}_{final}\}$, discarding the rest. This constructs the extreme-value dataset, denoted as $S_{\max}$, consisting of independent and identically distributed (i.i.d.) maxima ready for statistical fitting.
\paragraph{\textbf{GEV Fitting}} By the Fisher-Tippett-Gnedenko theorem \cite{coles_2001}, the block maxima are fitted to the GEV distribution using Maximum Likelihood Estimation (MLE) to estimate the defining shape, location, and scale parameters, denoted by the tuple $(\hat{\xi}, \hat{\mu}, \hat{\sigma})$.
\paragraph{\textbf{Confidence Estimation}}\label{ci}Because the distribution belongs to the Weibull domain ($\hat{\xi} < 0$), it possesses a finite right endpoint. Using the fitted parameters, this theoretical maximum of the Lyapunov derivative is evaluated as $z^* = \hat{\mu} - \hat{\sigma}/\hat{\xi}$. To account for finite-sample uncertainty in this point estimate, we employ empirical bootstrapping to construct a strict upper confidence bound, denoted $CI_{upper}$, for $z^*$. Finally, before the sublevel set can be formally certified, we enforce a two-part condition: the worst-case statistical bound must be strictly negative ($CI_{upper} < 0$), and the empirical dataset $S_{\max}$ must pass a Kolmogorov-Smirnov (KS) goodness-of-fit test to validate the underlying GEV distribution assumption.

\subsection{Maximal ROA via Binary Search}
Algorithm \ref{alg:certification} evaluates a single, fixed sublevel set $\Omega_\rho$. However, our ultimate objective is to maximize the certified volume of the ROA. Because the sublevel sets are strictly nested (for $\rho_1 < \rho_2$, $\Omega_{\rho_1} \subset \Omega_{\rho_2}$), we can efficiently approximate the maximal certified level set by wrapping the statistical certification engine in a binary search. 

Given a known safe lower bound $\rho_{low}$ (e.g. a small region near the origin verified via linearization) and an upper bound $\rho_{high}$, we iteratively evaluate the midpoint $\rho_{mid}$. If the EVT framework successfully certifies $\Omega_{\rho_{mid}}$ with the required statistical confidence, we update $\rho_{low} = \rho_{mid}$. If the certification is rejected, either due to a concrete counterexample or a positive upper confidence bound, we update $\rho_{high} = \rho_{mid}$. This bisection approach converges rapidly to the maximal robust $\rho$, completely avoiding the need for exhaustive grid searches across the level parameter

Furthermore, this bisection strategy benefits from an inherent computational asymmetry: while successfully certifying a conservative $\Omega_{\rho_{mid}}$ requires the full PSGLD sampling budget to estimate the EVT bound, rejecting an overly aggressive level set often terminates almost instantaneously upon the first empirical discovery of a positive Lie derivative, drastically accelerating the overall search.

\begin{algorithm}
\caption{Stochastic EVT Certification via PSGLD}
\label{alg:certification}
\begin{algorithmic}[1]
\renewcommand{\algorithmicrequire}{\textbf{Input:}}
\renewcommand{\algorithmicensure}{\textbf{Output:}}
\REQUIRE ODE field $f$, Lyapunov $V$, Level $\rho$, Block size $m$, Noise $T$, Step $\eta$, Confidence $1-\alpha$
\ENSURE Certification Decision
\STATE $S_{\max} \gets \emptyset$
\FOR{$b = 1$ to $N_{blocks}$}
    \STATE Sample batch $\{x^{(1)}_0, \dots, x^{(m)}_0\}$ uniformly on $\mathcal{M}$.
    \FOR{$i = 1$ to $m$}
        \STATE $x \gets x^{(i)}_0$
        \FOR{$k = 1$ to $K$}
            \STATE Sample $\xi \sim \mathcal{N}(0, I_N)$
            \STATE $\tilde{x} \gets x + \eta \nabla \dot{V}(x) + \sqrt{2T\eta} \xi$
            \STATE $x \gets \Pi_\mathcal{M}(\tilde{x})$
        \ENDFOR
        \STATE $y^{(i)}_{final} \gets \dot{V}(x)$
    \ENDFOR
    \STATE $M_b = \max \{ y^{(1)}_{final}, \dots, y^{(m)}_{final} \}$
    \STATE $S_{\max} \gets S_{\max} \cup \{ M_b \}$
\ENDFOR
\STATE $(\hat{\xi}, \hat{\mu}, \hat{\sigma}) \gets \text{MLE}(S_{\max})$
\IF{$\hat{\xi} \geq 0$} 
    \RETURN \textbf{FAIL} (Heavy tail detected) 
\ENDIF
\STATE $z^* \gets \hat{\mu} - \frac{\hat{\sigma}}{\hat{\xi}}$ \COMMENT{Upper bound of support}
\STATE $CI_{upper} \gets \text{Bootstrap}(z^*, 1-\alpha)$
\IF{$CI_{upper} < 0$ \AND $\text{GoodnessOfFit}(S_{\max})$ is True} 
    \RETURN \textbf{CERTIFIED} 
\ELSE
    \RETURN \textbf{REJECTED}
\ENDIF
\end{algorithmic}
\end{algorithm}

\subsection{Assumption Analysis and Failure Modes}
The validity of this statistical certification hinges on several practical considerations:
    \paragraph{\textbf{Sufficient Coverage}} The distribution of initial starting states (the PSGLD seeds) must intersect the basin of attraction of the global maximum. If the true peak lies in a statistically small region of the manifold, the chains might never explore it. This is mitigated by initializing the seeds uniformly across $\mathcal{M}$ and employing periodic chain reseeding to aggressively explore the latent space.
    \paragraph{\textbf{Asymptotic Convergence}} The block size $m$ must be large enough to converge to the GEV limit.
    \paragraph{\textbf{Compactness}} If the Lyapunov candidate $V(x)$ is not radially unbounded, the manifold $\mathcal{M}$ may not be compact, causing the optimizer to diverge toward infinity.
    
    \paragraph{\textbf{Robustness to Empirical Heuristics}} The theoretical justification models the sampling as a pure continuous-time Langevin diffusion, yielding the theoretical $\chi^2$ optimality gap. However, relying strictly on this idealized diffusion is computationally unviable for high dimensions, as it suffers from exponentially slow mixing times and trap-state stagnation. Achieving practical tractability requires heuristics that inherently break this idealized continuous-time limit, such as aggressive chain reseeding, gradient clipping, and soft-penalty constraints instead of exact projections. While these engineering tricks alter the raw sampling distribution, the block maxima method renders the approach non-parametric with respect to the base distribution. By the Fisher-Tippett-Gnedenko theorem, as long as the physical constraint of a finite upper bound is maintained on the compact manifold, the block maxima converge to the Weibull domain. By explicitly validating this convergence via the goodness-of-fit step detailed in Paragraph~\ref{ci}, we ensure the empirical reliability of the resulting EVT bound despite the presence of underlying optimization artifacts.

\section{NUMERICAL EXPERIMENTS}
\label{sec:experiments}

To evaluate the proposed statistical certification framework, we conduct two distinct numerical experiments. First, we assess the tightness of the statistical certification against exact deterministic verification methods on a standard 2D nonlinear benchmark. Second, we evaluate the computational scalability of the proposed framework against state-of-the-art baselines on a dense dissipative system of increasing state-space dimensionality, up to $N=500$. Lastly, we discuss our experimental setup.

\subsection{Parameterization of the Lyapunov Candidate}
While the primary focus of this work is the statistical \textit{verification} engine rather than the \textit{synthesis} of Lyapunov functions, evaluating the framework necessitates a baseline candidate function. We specifically avoid deep neural network architectures for candidate parameterization as the theoretical justification in Section \ref{sec:proof} relies on Assumption~\ref{ass:quadratic_geometry}, which requires the optimization landscape of the Lie derivative to exhibit Morse genericity, while neural network architectures often violate the non-degeneracy condition, thereby perturbing the local $\chi^2$ tail distribution required for the EVT-based approximation.

Consequently, we parameterize the Lyapunov candidate using a dictionary-based quadratic form (Gram matrix formulation): $V(x) = z(x)^\top Q z(x)$, where $z(x)$ is a generic library vector of smooth, linearly independent basis functions, and $Q \succ 0$ is a symmetric positive-definite parameter matrix. The matrix $Q$ is optimized offline via gradient descent, minimizing a loss function that applies a penalty only when the strict decrease condition is violated and $\dot{V} + \epsilon V > 0$. 

We emphasize that synthesizing optimal Lyapunov candidates in high-dimensional spaces remains an open research challenge; this specific formulation is employed strictly for its simplicity, computational efficiency and its adherence to the Morse genericity prerequisite.

\subsection{Tightness of Certification: 2D Van der Pol Oscillator}
We first consider the 2D reversed Van der Pol oscillator \cite{khodadadi_samadi_khaloozadeh_2014}, a well-known non-convex benchmark system featuring a stable limit cycle. The objective is to compare the state-space volume of the certified ROA, denoted $\Omega_\rho$, achieved by various verification techniques. 

We benchmark the proposed EVT approach (evaluated at a 99\% confidence level) against classical SOS programming, as well as formal verification techniques utilizing SMT applied to Neural Lyapunov Functions (NLF) \cite{chang_roohi_gao_2019}, Input Convex Neural Networks (ICNN) \cite{amos2017input}, and Physics-Informed Neural Networks (PINN) for Zubov equation~\cite{LIU2025112193,Wang2024}. 

As illustrated in Fig. \ref{fig:vdp_level_sets}, the standard SOS method yields the tightest ROA estimate, certifying $95\%$ of the state-space area enclosed by the limit cycle. To isolate and evaluate the certification capability of our statistical method independent of the candidate function's quality, we applied the EVT certification directly to the SOS-derived polynomial candidate (denoted EVT + SOS). This yielded a certified ROA of $83\%$, demonstrating that the statistical EVT approach is highly rigorous and produces verifiable bounds closely approximating exact deterministic certification. 

Conversely, when applying EVT to a candidate synthesized via our dictionary-based approach (denoted EVT + Dict-Gram), the certified invariant set is notably more conservative ($37\%$). This dichotomy highlights a critical insight: while the EVT framework serves as a robust and scalable \textit{certification} engine, the synthesis of optimal candidate functions in non-convex landscapes remains a distinct and open research direction.

\begin{figure}
    \centering
    \includegraphics[width=1\linewidth]{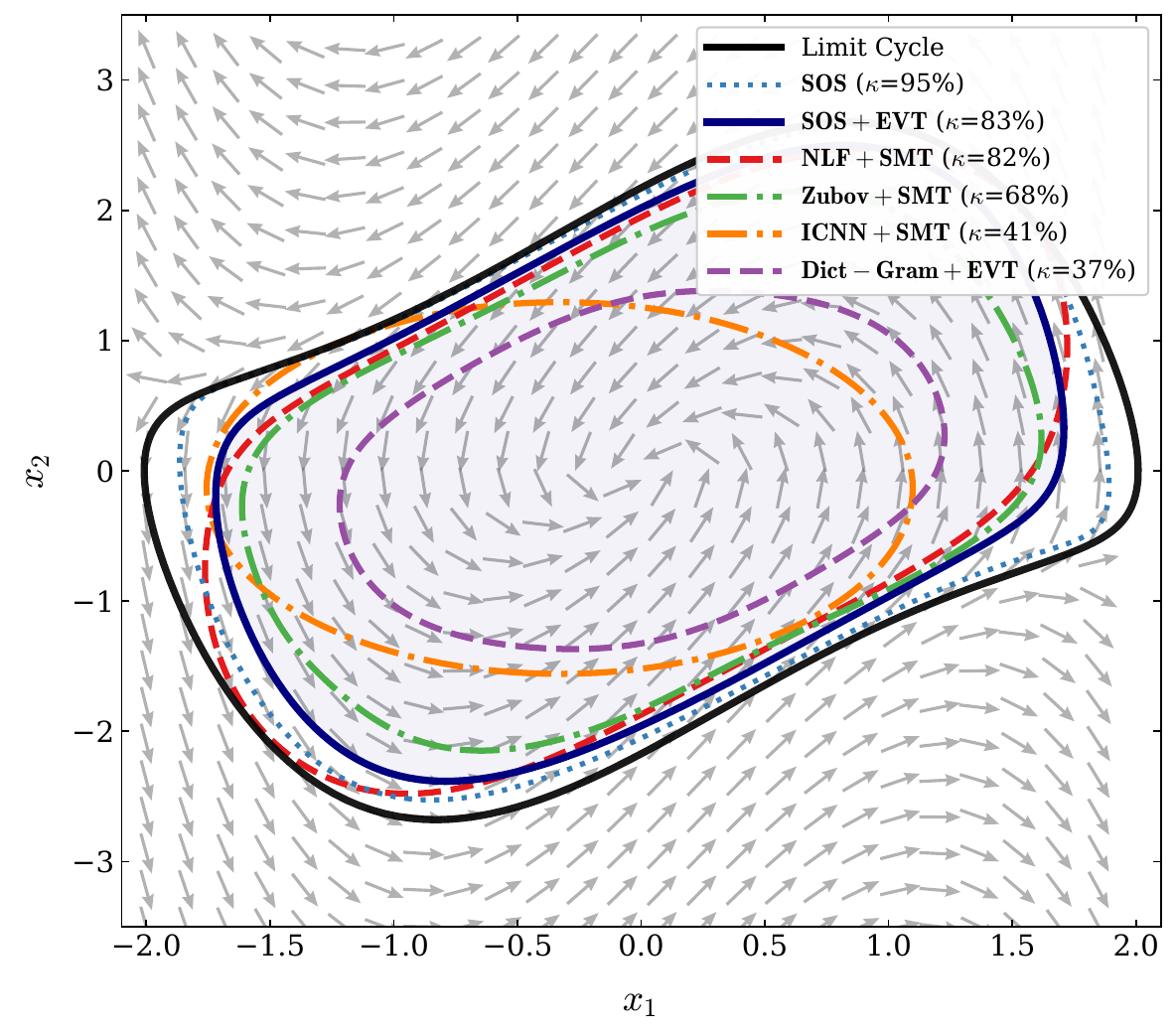}
    \caption{Certified Regions of Attraction for the 2D Van der Pol oscillator. Because our primary contribution is the scalable \textit{certification} engine rather than Lyapunov \textit{synthesis}, we evaluate EVT on two distinct candidates. While our proposed pipeline (EVT + Dict-Gram) yields a conservative ROA due to the simple dictionary-based synthesis, applying our EVT certification to an SOS-synthesized candidate (solid dark blue) achieves an ROA comparable to exact SOS formal verification (dotted blue). This isolates the conservatism strictly to the synthesis step and demonstrates the tightness of our statistical bound. The parameter $\kappa$ denotes the percentage of the true ROA successfully certified.}
    \label{fig:vdp_level_sets}
\end{figure}

\subsection{Scalability to High Dimensions}
While exact formal methods provide tight guarantees in low dimensions, their primary limitation is the curse of dimensionality. To evaluate scalability, we designed a dense, linear dissipative system governed by $\dot{x} = M x$, where $M \in \mathbb{R}^{N \times N}$ is a densely populated Hurwitz matrix constructed via orthogonal transformations. 

For the proposed approach, the smooth analytical nature of the dictionary-based parameterization introduced in Section IV-A is critical for maintaining computational tractability. By avoiding the vanishing gradients and backpropagation overhead inherent to deep neural networks, the computation of the Lie derivative and its gradients during PSGLD exploration remains highly stable and computationally efficient, even for a state dimension of $N=500$. A strict computational timeout of 10 minutes per certification task was enforced across all evaluated methods.

\begin{table}[ht]
\caption{Maximum Scalability within 10-Minute Timeout}
\label{tab:scalability}
\centering
\begin{tabular}{lcc}
\hline
\textbf{Certification Pipeline} & \begin{tabular}[c]{@{}c@{}}\textbf{Max Certified} \\ \textbf{Dimension ($N$)}\end{tabular} \\
\hline
Zubov (PINN) + SMT              & 2 \\
ICNN + SMT                      & 2 \\
NLF + SMT                       & 10 \\
Classical SOS                   & 20 \\
\textbf{Dict-Gram + EVT (Proposed)} & \textbf{500} \\
\hline
\end{tabular}
\end{table}

The results, summarized in Table \ref{tab:scalability}, highlight the severe computational limitations of existing deterministic verification pipelines. Techniques coupling neural network architectures (ICNN and PINN for Zubov PDE) with SMT solvers fail to scale beyond $N=2$. The dense parameterization and nonlinear activation functions within these architectures introduce immense combinatorial complexity, rendering the SMT decision procedures intractable within the allocated time. Specifically, the ICNN enforces convexity through dense skip-connections and Softplus activations, creating highly coupled algebraic constraints that SMT solvers struggle to resolve. Similarly, the PINN for the Zubov equation requires the solver to verify a rigid partial differential equation (PDE) constraint alongside transcendental Sigmoid outputs. The NLF formulation achieves marginally better scalability ($N=10$) by employing a shallow, sequential feedforward structure and enforcing positivity algebraically ($V = \text{out}^2 + \epsilon \|x\|^2$), thereby circumventing the combinatorial explosion caused by nested transcendental functions. Concurrently, algebraic SOS programming experiences prohibitive semi-definite matrix scaling limitations, succumbing at $N=20$.

In stark contrast, the proposed Dict-Gram + EVT approach successfully certified the ROA for the $N=500$ dimensional system. Because the Langevin-based stochastic exploration scales gracefully with the ambient state-space dimension, relying exclusively on local gradient evaluations rather than global algebraic or logical constraint resolution, it effectively bypasses the fundamental combinatorial bottlenecks that restrict traditional formal verification.

\subsection{Experimental Setup}
All numerical experiments were executed on a local workstation equipped with an Intel Core Ultra 9 275HX processor (24 physical cores, up to 5.4 GHz) and 64 GB of system memory. High-speed parallel computations, including PSGLD sampling and neural network baselines, were offloaded to an NVIDIA GeForce RTX 5090 Laptop GPU (24 GB GDDR7 VRAM). The computational environment was hosted on Windows Subsystem for Linux (WSL2) running Ubuntu 22.04 (Kernel 6.6.87) with NVIDIA driver 581.57. The primary software stack utilized Python (v3.10.12) and PyTorch (v2.10.0+cu128) \cite{Ansel_PyTorch_2_Faster_2024}. Formal verification baselines and algebraic formulations were evaluated using dReal (v4.21.6.2) \cite{gao_kong_clarke_2013} for SMT constraint resolution and Drake (via pydrake v1.49.0) for SOS. Source code is available at 
\footnote{\url{https://github.com/SOLARIS-JHU/SCORE-Statistical-Certification-of-ROA-via-EVT}}.

\section{Conclusion}
\label{sec:conclusion}

This paper introduced a novel statistical framework for certifying the ROA of high-dimensional nonlinear systems, bypassing the severe scaling limitations of classical deterministic verification. By framing the certification as a rare-event simulation, we demonstrated that the distribution of extreme values explored by PSGLD on a constraint manifold theoretically belongs to the Weibull domain of attraction. This allows for rigorous, statistical bounding of the Lyapunov derivative. 

Our numerical experiments validate both the tightness and the unprecedented scalability of the proposed method. While exact formal methods like SOS and SMT-based solvers succumb to combinatorial explosions and architectural complexities between 2 and 20 dimensions, the EVT-based approach successfully certified a dense 500-dimensional nonlinear system well within a 10-minute computational budget. 

Future work will focus on two primary directions. First, as highlighted by the 2D benchmark, developing more expressive and optimal construction methods for Lyapunov candidates remains an open challenge due to the ill-posedness of the problem and the complex optimization landscape. Second, we aim to extend this framework to infinite-dimensional systems, such as discretized PDEs. Such systems introduce extreme ill-conditioning into the optimization landscape, necessitating the integration of Riemannian preconditioning and Monte Carlo Markov Chain algorithms to maintain efficient mixing in high dimensions.

\bibliographystyle{IEEEtran}
\bibliography{bib}


\end{document}